\documentclass[journal]{IEEEtran}
\usepackage{amsmath,amsfonts}
\usepackage{algorithmicx,algorithm,algpseudocode}
\usepackage{array}
\usepackage[caption=false,font=normalsize,labelfont=sf,textfont=sf]{subfig}
\usepackage{textcomp}
\usepackage{stfloats}
\usepackage{url}
\usepackage{verbatim}
\usepackage{graphicx}
\usepackage{cite}
\usepackage{svg}
\newtheorem{theorem}{Theorem}
\usepackage{multirow}
\usepackage{booktabs}

\begin{document}

\title{LLM-Driven Large-Scale Spectrum Access}

\author{{Ning Yang$^*$,~\IEEEmembership{Senior Member,~IEEE}, Jinliang Gao, Haijun Zhang,~\IEEEmembership{Fellow,~IEEE}}
\thanks{Ning Yang is with the Institute of Automation, Chinese Academy of Sciences, Beijing, 100190, China (e-mail: ning.yang@ia.ac.cn).}
\thanks{Jinliang Gao is with the College of Computer Science, Sichuan University, Chengdu, Sichuan, 610065, China (e-mail: gaojinliang666@gmail.com).}
\thanks{Haijun Zhang is with the School of Computer and Communication Engineering, University of Science and Technology Beijing, Beijing, 100083, China (e-mail: zhanghaijun@ustb.edu.cn).}
\thanks{This work was supported  by the National Natural Science Foundation of China under Grants 62301559.}
\thanks{($^*$Corresponding Author: Ning Yang)}
}



\maketitle

\begin{abstract} 
Efficient spectrum management in massive-scale wireless networks is increasingly challenged by explosive action spaces and the computational intractability of traditional optimization. This study proposes a LLM-Driven Large-Scale Spectrum Access (LSA) framework rooted in Group Relative Policy Optimization (GRPO). To overcome the computational intractability caused by ultra-long prompts in large-scale scenarios, we develop a hierarchical state serialization mechanism that synthesizes global environment statistics with localized critical constraints, enabling the LLM to perform high-dimensional reasoning within a bounded context window. Simulation results under strictly time-bounded inference protocols reveal that the code-driven paradigm eliminates the Supervised Fine-Tuning (SFT) cold-start bottleneck and leverages direct execution feedback to achieve superior scaling laws. The framework maintains robust spectral utility and generalization across varying network scales, yielding consistent and empirically superior performance over stochastic heuristics, and surpassing partitioned classical solvers in ultra-dense regimes under matched compute budgets. Code is available at https://github.com/Xtdzs/LLM-Driven-Large-Scale-Spectrum-Access.
\end{abstract}

\begin{IEEEkeywords}
Large Language Model (LLM), Group Relative Policy Optimization (GRPO), Dynamic Spectrum Access (DSA), Scalability.
\end{IEEEkeywords}

\section{Introduction}

\IEEEPARstart{T}{he} rapid evolution of 6G wireless networks has catalyzed an unprecedented demand for high-throughput transmission and massive connectivity. To support the burgeoning requirements of the Internet of Things (IoT) and autonomous systems, Intelligent Spectrum Access (ISA) has emerged as a fundamental paradigm shift from static frequency allocation to autonomous, dynamic management. By allowing heterogeneous nodes to opportunistically access underutilized bands, ISA significantly enhances spectral efficiency in dense environments \cite{ref1, ref2}.

However, despite its theoretical potential, spectrum management in large-scale scenarios faces severe technical bottlenecks that hinder practical deployment. Modern wireless environments are increasingly defined by high dimensionality, extreme dynamism, and complex coupling between network parameters. Consequently, as node density increases, the joint optimization of channel selection and interference alignment becomes computationally intensive, frequently reaching NP-hard complexity in decentralized settings. Furthermore, in massive deployments, the signaling overhead required for centralized multi-agent coordination often scales exponentially, eventually exceeding the available bandwidth. This will create a critical bottleneck where the latency of coordination exceeds the coherence time of the wireless channel, leading to diminished efficiency gains or even complete system synchronization failure.

Existing solutions, primarily based on Deep Reinforcement Learning (DRL), exhibit notable limitations when applied to these high-dimensional and non-stationary settings \cite{ref3, ref4}. While DRL improves the handling of larger state spaces compared to traditional mathematical optimization, standard agents often suffer from low sample efficiency and slow convergence when facing explosive action spaces. A primary cause of this instability is the moving target problem inherent in multi-agent environments; as all agents learn and update their policies simultaneously, the environment becomes non-stationary from the perspective of any individual agent, leading to severe policy oscillation and divergent training curves. In addition to these convergence issues, conventional RL models frequently rely on rigid, hand-crafted reward structures that fail to balance individual throughput with global fairness in fluctuating channel conditions. Most critically, existing models often lack the inherent generalization power required to adapt to unseen network topologies\cite{ref5} without extensive retraining, which is computationally prohibitive for resource-constrained edge devices.

To address these multi-faceted challenges, we propose a Large Language Model (LLM)-driven Spectrum Access (LSA) framework under high-dimensional and combinatorial constraints, where classical optimization and deep reinforcement learning methods suffer from poor scalability and limited generalization. To address this, we reformulate spectrum allocation as a structured reasoning problem and leverage a code-specialized LLM trained with group-relative policy optimization (GRPO), enabling scalable policy generation with deterministic feasibility repair.

The primary contributions of this work are summarized as follows:

\begin{itemize}
\item \textbf{Constraint-Aware Semantic Reasoning Architecture:} A novel framework is developed that maps combinatorial spectral constraints into the LLM's native reasoning manifold. By employing hierarchical state serialization and structural prompt injection, the architecture transforms high-dimensional physical observations into logically executable semantic directives. This distilled representation filters stochastic environmental noise while preserving critical interference dependencies, effectively overcoming the information redundancy and brittle generalization that hinder conventional recurrent or MLP-based agents.
\item \textbf{Scalable Policy Alignment via GRPO:} A scalable optimization methodology is proposed by integrating GRPO with a hierarchical state serialization mechanism. By employing group-relative baselines to stabilize policy gradients, this approach aligns the LLM's decision logic with system-level throughput and collision mitigation objectives. This integrated design facilitates stable policy convergence in massive-scale regimes while eliminating the computational overhead typically associated with approximating separate value-function critics.
\item \textbf{Paradigm Validation and Large-Scale Scaling:} A rigorous dual-path evaluation demonstrates that code-specialized backbones trained with pure GRPO inherently capture structural and algorithmic priors, bypassing the need for supervised cold-start. Extensive simulations confirm that this reasoning-centric framework achieves competitive performance in small-scale regimes and establishes a favorable scaling law, surpassing classical partitioned solvers in ultra-dense configurations of up to 2000 nodes under equivalent runtime constraints.
\end{itemize}

The remainder of this paper is organized as follows. Section II reviews related work in ISA and LLM applications. Section III defines the mathematical system model and the spectrum access problem. Section IV details the technical components of the LSA framework and the GRPO methodology. Section V outlines the experimental setup, including simulation parameters, baseline configurations, and evaluation protocols. Section VI provides a comprehensive analysis of the experimental results, and Section VII concludes the paper with a discussion.

\section{Related Work}

The development of ISA has transitioned from model-based optimization to data-driven autonomous learning. This section categorizes existing literature into three primary research directions.

\subsection{Conventional Mathematical and Game-Theoretic Frameworks}
Early research on Dynamic Spectrum Access (DSA) focused on rigorous mathematical modeling and optimization. To characterize the stochastic nature of primary user behaviors, renewal-theoretical approaches \cite{ref6} and queuing analysis \cite{ref7} were introduced to model temporal channel availability. In multi-user environments, game theory has been widely used to model competitive or cooperative interactions. Contract-theoretic models \cite{ref1, ref8} and auction mechanisms \cite{ref9} were developed to incentivize spectrum sharing. Furthermore, matching theory \cite{ref10, ref11} and graph coloring formulations \cite{ref12} have been applied to coordinate allocation and mitigate interference.

These conventional methods are heavily model-dependent, requiring precise prior knowledge of channel statistics that are often unavailable in real-world 6G deployments. Additionally, the computational complexity of solving these optimization problems, such as Lyapunov optimization \cite{ref13} or aggregative games \cite{ref14}, scales exponentially with the node count, rendering them intractable for large-scale IoT networks.

\subsection{Reinforcement Learning-Based Autonomous Access}
To address model dependency, reinforcement learning (RL) enables agents to learn policies through environmental interaction. Initial frameworks utilized Markov Decision Processes (MDP) to establish sensing sequences \cite{ref15, ref16}, while subsequent efforts employed multi-armed bandit (MAB) formulations \cite{ref11, ref20} and stochastic learning automata \cite{ref17, ref18}. The integration of deep learning led to DRL for handling high-dimensional state spaces \cite{ref3, ref21}. Architectures such as Deep Q-Networks (DQN) \cite{ref22, ref23} and Actor-Critic frameworks \cite{ref24, ref25} have been proposed for channel selection and power control.

Current DRL-based methods face challenges in massive-scale environments. Most DRL agents rely on neural networks that lack semantic understanding, resulting in slow convergence when action spaces explode. Furthermore, although RL is designed for non-stationary environments, traditional policy gradients often suffer from high variance\cite{ref26} and policy instability during multi-agent coordination. The LSA framework addresses these constraints by replacing rigid networks with an LLM engine, providing the semantic reasoning necessary for efficient decision-making.

\subsection{Scalability and Emerging Paradigms}
Recent literature has explored advanced paradigms to improve ISA scalability. Federated Learning (FL) has been integrated with MARL \cite{ref5, ref26} to facilitate distributed policy updates. To meet complex service requirements, DRL has been extended to NOMA-aided networks \cite{ref27} and mobile edge computing (MEC) scenarios \cite{ref28}. Additionally, Gaussian process-based RL \cite{ref29} and heuristically accelerated Q-learning \cite{ref30} have been investigated to improve convergence.

While these methods enhance scalability, they often suffer from a generalization gap when transitioning to unseen configurations. The emergence of LLMs provides a potential solution for zero-shot adaptation \cite{ref31, ref32, ref2, ref33}. LLMs solve the scalability bottleneck by mapping complex numerical states into low-dimensional semantic spaces, where the model's pre-trained sequence logic identifies latent coordination patterns without explicit signaling. To resolve the computational overhead of applying LLMs to high-frequency decisions, this work utilizes a GRPO mechanism to stabilize training in massive-scale dynamic networks.

\begin{table}[t]
\renewcommand{\arraystretch}{1.2}
\caption{Comparison of Spectrum Access Methodologies}
\centering
\begin{tabular}{lccc}
\toprule
\textbf{Features} & \textbf{Conventional} & \textbf{DRL-Based} & \textbf{Proposed LSA} \\ 
\midrule
Model Dependency & High & Low & Low \\
Reasoning Ability & None & None & High \\
Scalability (Nodes) & Low & Medium & High \\
Inference Latency & Low & Very Low & Moderate \\
Generalization & Poor & Moderate & Robust \\
\bottomrule
\multicolumn{4}{@{}p{\dimexpr\columnwidth-2\tabcolsep}@{}}{\small Note: LSA latency reflects macro-scheduling decision time (seconds). The framework intentionally trades microsecond response for structural adaptability and zero-shot generalization, as validated in Sec. V-A.}
\end{tabular}
\end{table}

\section{System Model and Problem Formulation}

This section establishes the formal analytical framework for the ISA problem in a large-scale wireless environment. We consider a terrestrial communication architecture where a central Base Station (BS) coordinates spectral resources among a massive set of heterogeneous nodes within a structured time-frequency grid.

\subsection{Network and Traffic Model}
We consider a wireless network consisting of a single BS and a set of registered users denoted by $\mathcal{U} = \{1, 2, \dots, U\}$. The available spectrum is partitioned into a set of orthogonal frequency channels $\mathcal{C} = \{1, 2, \dots, C\}$, each characterized by a fixed bandwidth $B_c$. The system operates over a discrete-time horizon $t \in \{1, 2, \dots, T\}$, where each slot $t$ represents a fundamental scheduling interval for spectrum coordination \cite{ref14, ref34}.

In massive-scale IoT or 6G scenarios, traffic demands are inherently sporadic and non-stationary. At each time slot $t$, a dynamic subset of users $\mathcal{K}_t \subseteq \mathcal{U}$ initiates transmission requests. The cardinality $K_t = |\mathcal{K}_t|$ fluctuates over time to reflect bursty traffic patterns, where $K_t$ varies within the dynamic range $[K_{\min}, K_{\max}] = [0.15K, 0.25K]$, under a set maximum scale parameter $K$ tailored to specific network densities. To manage this scale efficiently, the BS must execute a rapid coordination strategy to map active users to vacant channels while strictly adhering to spectral masks and primary user protection constraints \cite{ref6, ref7}.

\subsection{Transmission and Link Budget Model}
For any requesting user $u \in \mathcal{K}_t$, the signal quality at the BS depends on its spatial distribution and the underlying propagation environment. Let $(x_{\text{BS}}, y_{\text{BS}})$ and $(x_u, y_u)$ denote the Cartesian coordinates of the BS and user $u$, respectively. The normalized Euclidean distance is defined as:
\begin{equation}
d_u(t) = \frac{\sqrt{(x_u(t) - x_{\text{BS}})^2 + (y_u(t) - y_{\text{BS}})^2}}{D_{\max}},
\end{equation}
where $D_{\max}$ represents the maximum coverage radius of the cell. To accurately characterize the link budget, we employ a standard path loss model. The received signal power $P_{u,c}(t)$ for user $u$ on channel $c$ is formulated as:
\begin{equation}
\label{eq:trans_power}
P_{u,c}(t) = P_{\text{tx}} \cdot G_t \cdot G_r \cdot \left( \frac{\lambda}{4\pi d_u(t) D_{\max}} \right)^\alpha,
\end{equation}
where $P_{\text{tx}}$ is the transmit power, $G_t$ and $G_r$ are the antenna gains, $\lambda$ is the carrier wavelength, and $\alpha$ is the path loss exponent. 

Assuming an additive white Gaussian noise (AWGN) environment with noise power spectral density $N_0(t)$, the achievable transmission rate $R_{u,c}(t)$ follows the Shannon-Hartley theorem \cite{ref35}:
\begin{equation}
R_{u,c}(t) = B_c \log_2 \left( 1 + \frac{P_{u,c}(t)}{\sigma^2 + I_{u,c}(t)} \right),
\end{equation}
where $\sigma^2 = N_0(t) B_c$ denotes the total thermal noise power within the channel bandwidth $B_c$, and $I_{u,c}(t)$ represents the aggregate interference from co-channel primary users or adjacent cell leakage. In this study, we assume the BS maintains an occupancy vector $\mathbf{O}_t = [o_1(t), \dots, o_C(t)]$, where $o_c(t)=1$ indicates that channel $c$ is unavailable due to primary user activity, thus imposing a hard constraint on the ISA coordination process \cite{ref14}.

\subsection{Problem Formulation}
The objective of the spectrum coordination problem is to determine an optimal allocation matrix $\mathbf{X}_t \in \{0, 1\}^{K_t \times C}$ at each slot $t$, where the binary element $x_{u,c}(t) = 1$ if channel $c$ is assigned to user $u$, and $x_{u,c}(t) = 0$ otherwise. 

To maximize the long-term aggregate spectral utility while considering both instantaneous sum-rate and resource distribution fairness, the optimization problem is formally defined as:
\begin{align}
    \max_{\{\mathbf{X}_t\}_{t=1}^T} \quad & \sum_{t=1}^T \gamma^{t-1} \left( \sum_{u \in \mathcal{K}_t} \sum_{c=1}^C x_{u,c}(t) R_{u,c}(t) \right) \label{eq:objective} \\
    \text{s.t.} \quad & \sum_{c=1}^C x_{u,c}(t) \leq 1, \quad \forall u \in \mathcal{K}_t, \label{eq:const_user} \\
    & \sum_{u \in \mathcal{K}_t} x_{u,c}(t) + o_c(t) \leq 1, \quad \forall c \in \mathcal{C}, \label{eq:const_chan} \\
    & x_{u,c}(t) \in \{0, 1\}, \quad \forall u, c, t, 
\end{align}
where $\gamma \in [0, 1)$ is a discount factor reflecting the temporal coupling of traffic demands and spectral efficiency. 

In this formulation, \eqref{eq:const_user} represents the single-connectivity constraint, ensuring that each requesting user is served by at most one frequency channel to maintain transceiver simplicity. Constraint \eqref{eq:const_chan} enforces spectral exclusivity, which prevents co-channel interference by ensuring that a channel is not simultaneously occupied by multiple secondary users or a primary user (as indicated by $o_c(t)$). The objective \eqref{eq:objective} thus seeks a collision-free and executable scheduling policy that balances cumulative throughput with long-term network stability.

\subsection{Hardness Analysis}
To formally justify the necessity of the proposed intelligent reasoning framework, we establish the NP-hardness of the spectrum allocation problem \eqref{eq:objective} via a polynomial-time reduction from the Koopmans-Beckmann form of the Quadratic Assignment Problem (QAP). 

In ultra-dense deployments, the aggregate utility must account for co-channel interference coupling. We therefore extend the linear objective \eqref{eq:objective} to a fairness-and-interference-aware form:
\begin{equation}
\begin{split}
    \max_{\mathbf{X}} \sum_{t=1}^T \gamma^{t-1} \bigg( & \sum_{u,c} x_{u,c} R_{u,c} \\
    - & \eta \sum_{u \neq u'} \sum_{c} x_{u,c} x_{u',c} \mathcal{I}_{u,u'}(c) \bigg),
\end{split}
\end{equation}

where $\mathcal{I}_{u,u'}(c)$ models the pairwise interference penalty on channel $c$, and $\eta$ controls the coupling strength. Ignoring the temporal discount for a single-slot instance, the objective becomes quadratic in $\mathbf{X}$.

The QAP seeks a permutation $\pi \in S_n$ minimizing $\sum_{i,j} f_{ij} d_{\pi(i)\pi(j)}$. Consider a spectrum instance with $|\mathcal{K}_t| = |\mathcal{E}_t| = n$. Let the assignment matrix $x_{u,c}=1 \iff \pi(u)=c$. By mapping the user-channel coupling term $F_{uu'} \propto \mathcal{I}_{u,u'}$ and channel correlation $D_{cc'}$ to the QAP flow/distance matrices, our quadratic maximization problem is isomorphic to the QAP minimization problem under linear transformation. Since the QAP is NP-hard in the strong sense and the mapping from network parameters to $F, D$ is polynomial, the interference-aware spectrum allocation problem is NP-hard. 

Consequently, the factorial growth of the feasible set $\mathcal{X}$ renders exact branch-and-bound or exhaustive search computationally prohibitive for real-time macro-scheduling. This theoretical intractability necessitates a reasoning-based framework capable of navigating high-dimensional combinatorial landscapes within bounded latency, as detailed in Section IV.

While \eqref{eq:objective} formally defines the system-level objective, its combinatorial nature and non-differentiable constraints prevent direct gradient-based optimization. To enable data-driven policy learning within the proposed framework, we transform this objective into a differentiable scalar reward signal that explicitly couples throughput, interference mitigation, and fairness. The precise formulation of this reward function and its integration into the GRPO advantage estimation are detailed in Section~\ref{sec:reward_grpo}.

\begin{figure*}[!t]
\centering
\includegraphics[width=1.0\textwidth]{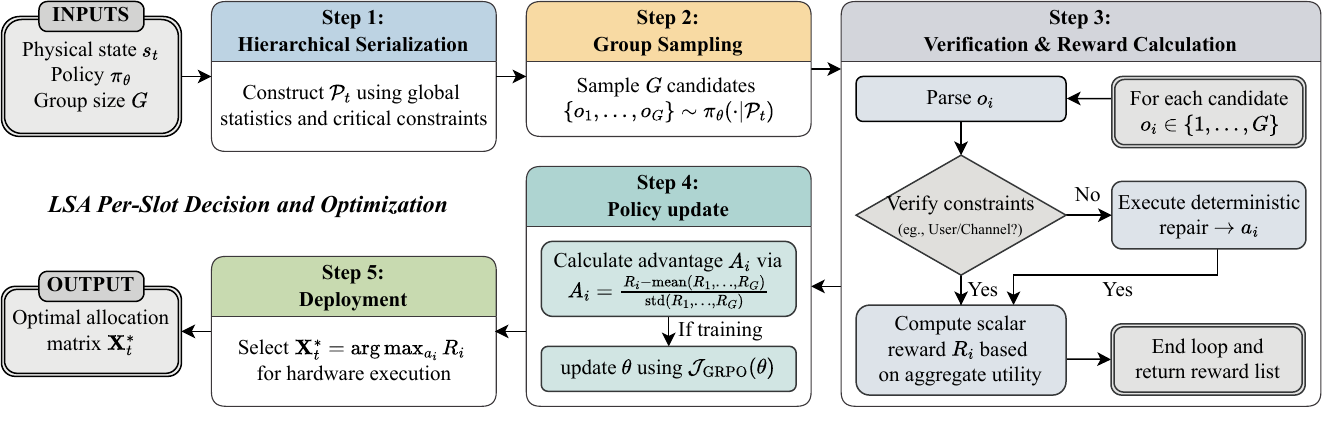}
\caption{Per-slot LSA workflow: state serialization to $\mathcal{P}_t$, group-based code generation, deterministic repair with reward evaluation, GRPO-based training, and deployment of $\mathbf{X}_t^*$.}
\label{fig:framework}
\end{figure*}

\section{Proposed LLM-Driven Spectrum Access Framework}

To address combinatorial intractability and slow convergence of DRL in large-scale deployments, we propose LSA, which reformulates spectrum allocation as structured reasoning. The framework leverages LLM in-context generalization to handle massive heterogeneous networks without explicit high-dimensional transition modeling.

\subsection{Architectural Design and Semantic Mapping}
The key challenge in large-scale spectrum access is the factorial action space, which destabilizes value-based RL. LSA replaces numerical value estimation with semantic reasoning and consists of three components:

First, a \textit{Semantic Feature Mapping} converts physical states into structured language representations for coordination pattern extraction.  
Second, \textit{Group-based Policy Alignment} adopts GRPO to stabilize updates in discrete action spaces.  
Third, a \textit{Deterministic Repair Pipeline} enforces constraints in \eqref{eq:const_user} and \eqref{eq:const_chan}.

\subsection{Hierarchical State Serialization and Prompt Engineering}
To bridge numerical states and linguistic reasoning, we propose \textit{Hierarchical State Serialization}, which converts $s_t$ into a compact prompt $\mathcal{P}_t$.

The prompt contains three semantic blocks:

\begin{itemize}
\item \textbf{Network Statistical Context}: aggregates macroscopic features such as noise floor $N_0(t)$, active users $K_t$, and summary statistics (mean, max, std) of $\{P_{u,c}\}$ and $\{B_c\}$. This reduces token cost while preserving coordination signals.

\item \textbf{Operational Constraints and Guidance}: embeds Shannon-Hartley law and unit conversion rules, explicitly enforcing constraints \eqref{eq:const_user} and \eqref{eq:const_chan} as structured directives for collision-free assignment.

\item \textbf{Output Schema Specification}: restricts output to a Python dictionary \texttt{action = \{user\_id: channel\_id\}}, ensuring executable and structured generation using standard scientific libraries.
\end{itemize}

Overall, the prompt is designed to maximize reasoning efficiency under a 2048-token limit, even in ultra-dense scenarios.

\textit{Remark 2 (Signaling Overhead)}: Aggregating global statistics introduces communication overhead, but this cost is amortized since LSA operates as a macro-scheduler over longer time horizons. Compared to continuous coordination in multi-agent DRL, the overhead remains significantly lower.

\subsection{Dual-Paradigm Training Dynamics}

We compare two training strategies: Supervised Fine-Tuning (SFT) + GRPO and direct GRPO on code models.

\textbf{1) Text-Based Pipeline (SFT + GRPO).}  
SFT stabilizes early training using expert dataset $\mathcal{D}$:

\begin{equation}
\mathcal{J}_{\text{SFT}}(\theta) = \mathbb{E}_{(P, o^*) \sim \mathcal{D}} \left[ \sum_{\tau=1}^{|o^*|} \log \pi_\theta(o_\tau^* \mid o_{<\tau}^*, P) \right].
\end{equation}

It provides syntactic priors and reduces invalid outputs. GRPO then refines the policy with KL anchoring to $\pi_{\text{SFT}}$, ensuring bounded exploration.

\textbf{2) Code-Specialized Pipeline (Direct GRPO).}  
The code backbone skips SFT due to strong pretraining on algorithmic corpora. GRPO is applied directly with group-relative advantages:

\begin{equation}
A_i = \frac{R(o_i) - \mu_R}{\sigma_R + \epsilon}.
\label{eq:advantage}
\end{equation}

The final objective:

\begin{equation}
\begin{split}
\mathcal{J}_{\text{GRPO}}(\theta) = \mathbb{E}_{P_t, \{o_i\}} \Bigg[ & \frac{1}{G}\sum_{i=1}^G \min\Big( r_i(\theta) A_i, \\
& \text{clip}(r_i(\theta), 1-\epsilon_c, 1+\epsilon_c)A_i \Big) \\
& - \beta \mathbb{D}_{\text{KL}}(\pi_\theta || \pi_{\text{ref}}) \Bigg],
\end{split}
\label{eq:grpo_final}
\end{equation}

where $r_i(\theta)$ is the importance ratio and $\epsilon_c$, $\beta$ control update stability.

This approach improves stability on sparse execution rewards, as GRPO filters variance without requiring a critic.

Overall, performance differences arise from representation bias: the Coder backbone directly operates in a structured semantic manifold aligned with execution logic, while SFT introduces an intermediate imitation bottleneck and distribution shift.

\subsection{Execution-Aware Composite Reward Formulation}
\label{sec:reward_grpo}

The reward signal bridges token-level generation and physical network utility. To mitigate gradient scale imbalance, we define a normalized tripartite reward:
\begin{equation}
R(o) = w_1 \hat{R}_{\text{struct}}(o) + w_2 \hat{R}_{\text{perf}}(o) + w_3 \hat{R}_{\text{depth}}(o),
\end{equation}
where $w_1, w_2, w_3$ control objective weights and each $\hat{R}_k(\cdot)$ is rescaled to $[-10, 10]$ to ensure magnitude consistency. This normalization prevents structural signals from dominating gradients due to discrete execution feedback.

During training, the deterministic repair pipeline $\mathcal{R}(\cdot)$ is disabled so that rewards are computed on raw outputs $o$. Constraint-violating generations receive strong structural penalties ($\hat{R}_{\text{struct}} \to 0$), enforcing internalization of syntactic and logical priors. In contrast, $\mathcal{R}(\cdot)$ is only activated during inference to guarantee feasibility.

\noindent \textbf{1) Hierarchical Structural Compliance ($\hat{R}_{\text{struct}}$):}
Structural compliance is defined as a progressive validation signal rather than a binary filter, aggregating: \textit{syntactic executability}, \textit{constraint alignment} with active system states, and \textit{context-aware reasoning}. The reward is capped at 1.2 and acts as a soft curriculum that penalizes hallucinated or shallow outputs while guiding learning toward constraint-aware generation.

Compared to a binary filter, this design avoids excessive rejection of early-stage samples (i.e., $\hat{R}_{\text{struct}} \to 0$ for most outputs), which would otherwise induce gradient sparsity and destabilize training. Instead, progressive scoring improves exploration while maintaining constraint awareness.

\noindent \textbf{2) Relative Performance Gain ($\hat{R}_{\text{perf}}$):} To ensure density-invariant optimization, throughput is normalized against a dynamically generated random valid assignment $\mathbf{a}_{\text{rand}}$:
\begin{equation}
\hat{R}_{\text{perf}}(o) = \left( \frac{\mathcal{T}(\mathbf{a}_{\text{LLM}})}{\mathcal{T}(\mathbf{a}_{\text{rand}})} - 1 \right) \cdot \psi_{\text{penalty}}(\mathbf{a}_{\text{LLM}}),
\end{equation}
where $\mathcal{T}(\cdot)$ computes aggregate Shannon capacity. Crucially, this stochastic baseline expands the reward dispersion within each sampled group. In high-dimensional combinatorial landscapes, naive sampling often yields heavily clustered rewards, causing the intra-group reward variance $\sigma_R \to 0$ in \eqref{eq:advantage}, which leads to gradient collapse. By anchoring evaluations to $\mathbf{a}_{\text{rand}}$, the system preserves a robust variance $\sigma_R$, enabling continuous policy exploration.

\noindent \textbf{3) Reasoning Depth Regularization ($\hat{R}_{\text{depth}}$):} A length-aware incentive term is introduced to enforce sufficient reasoning depth while preventing premature code truncation:
\begin{equation}
\hat{R}_{\text{depth}}(o) = - \max\left(0, \frac{L_{\text{thr}} - L_{\text{out}}}{L_{\text{thr}}} \cdot \kappa \right),
\end{equation}
where $L_{\text{out}}$ is the token length of the generated response, $L_{\text{thr}} = L_{\max}/2 = 512$ defines the minimum reasoning depth threshold, and $\kappa=5$ sets the penalty magnitude. 
This formulation establishes a ``soft lower bound'' on output length: responses shorter than the threshold incur a linearly scaled penalty (capped at $-5$), which saturates to zero once $L_{\text{out}} \geq L_{\text{thr}}$. Unlike traditional length penalties that indiscriminately suppress verbosity, this design specifically penalizes shallow, truncated generations, compelling the policy to produce step-by-step constraint derivations and coherent algorithmic chains. The deterministic repair pipeline subsequently guarantees physical feasibility before deployment.

\textit{Remark 3 (Hyperparameter Sensitivity):} While extreme operational deviations can induce reward-hacking—such as over-weighting performance metrics ($w_2=2$) at the cost of hard syntactic alignment—minor perturbations around the baseline configurations ($w_1=1, w_2=1, w_3=1$) show remarkable convergence stability. This robustness stems from the structural decoupling of the curriculum reward combined with the $[-10, 10]$ dynamic range boundaries, which effectively neutralizes single-metric gradient dominance.

\begin{algorithm}[t]
\caption{LSA Decision and Policy Alignment}
\begin{algorithmic}[1]
\State \textbf{Input:} Physical state $s_t$, policy $\pi_\theta$, group size $G$
\State \textbf{Output:} Optimal allocation matrix $\mathbf{X}_t^*$
\State \textbf{1. Hierarchical Serialization:} 
\State \quad Construct $\mathcal{P}_t$ using global statistics and critical constraints.
\State \textbf{2. Group Sampling:} 
\State \quad Sample $G$ candidates $\{o_1, \dots, o_G\} \sim \pi_\theta(\cdot | \mathcal{P}_t)$.
\State \textbf{3. Verification and Reward Calculation:}
\For{each candidate $o_i \in \{1, \dots, G\}$}
    \State Parse $o_i$ and verify constraints \eqref{eq:const_user}--\eqref{eq:const_chan}.
    \If{Inference Phase}
        \State Execute deterministic repair $\mathcal{R}(o_i) \rightarrow a_i$.
    \Else
        \State Set $a_i \leftarrow o_i$ (repair disabled for training gradient purity).
    \EndIf
    \State Compute scalar reward $R_i$ based on aggregate utility.
\EndFor
\State \textbf{4. Policy Update:} 
\State \quad Calculate advantage $A_i$ via \eqref{eq:advantage} and update $\theta$ using \eqref{eq:grpo_final}.
\State \textbf{5. Deployment:} 
\State \quad Select $\mathbf{X}_t^* = \arg\max_{a_i} R_i$ for hardware execution.
\end{algorithmic}
\end{algorithm}

\subsection{Feasibility Verification and Deterministic Repair} \label{sec:repair}

To ensure robustness in large-scale deployments where LLM outputs may violate constraints due to stochastic token generation, we introduce a deterministic repair pipeline. For a serialized state $\mathcal{P}_t$, the raw LLM output is denoted as $\mathbf{m}_t$.

We represent the allocation matrix $\mathbf{X}_t$ as a mapping $\mathbf{a}_t: \mathcal{K}_t \rightarrow \mathcal{E}_t$, where $\mathbf{a}_t(u)=c \Leftrightarrow x_{u,c}(t)=1$. The feasible channel set is defined as $\mathcal{E}_t = \{c \in \mathcal{C} \mid o_c(t)=0\}$. We assume $|\mathcal{E}_t| \ge |\mathcal{K}_t|$.

\begin{theorem}[Feasibility and Local Optimality of Repair]
Under $|\mathcal{E}_t| \ge |\mathcal{K}_t|$, the repair operator $\mathcal{R}(s_t, \mathbf{m}_t)$ produces a feasible mapping $\mathbf{a}_t$ such that:
\begin{equation}
\mathbf{a}_t(u) \in \mathcal{E}_t, \forall u \in \mathcal{K}_t, \quad \mathbf{a}_t(u) \neq \mathbf{a}_t(u'), \forall u \neq u'.
\end{equation}
It preserves valid LLM suggestions and maximizes throughput within conflicting subsets.
\end{theorem}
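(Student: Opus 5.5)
The repair operator $\mathcal{R}(s_t,\mathbf{m}_t)$ has not been specified yet, so I will fix a concrete three-stage version and prove both claims for it. The construction is as follows.

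\emph{Stage (i), filtering.} Parse $\mathbf{m}_t$ into a partial map $\tilde{\mathbf{a}}$. Discard every entry $u\mapsto c$ with $u\notin\mathcal{K}_t$ or $c\notin\mathcal{E}_t$.

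\emph{Stage (ii), conflict resolution.} For each channel $c$, let $S_c=\{u:\tilde{\mathbf{a}}(u)=c\}$. If $|S_c|=1$, the single suggestion is kept unchanged; this is the ``valid'' set $V$. If $|S_c|\ge 2$, keep only $u^\star_c=\arg\max_{u\in S_c} R_{u,c}(t)$, breaking ties by smallest index. The remaining users in $S_c$ are released into an unassigned pool $\mathcal{U}_0$, together with all users dropped in stage (i) or left unassigned by the LLM.

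\emph{Stage (iii), completion.} Let $\mathcal{F}=\mathcal{E}_t\setminus\{\text{channels used after (ii)}\}$. Assign users of $\mathcal{U}_0$ to channels of $\mathcal{F}$ greedily: process users in order of decreasing $\max_{c\in\mathcal{F}}R_{u,c}$, and give each one its best remaining free channel.

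\textbf{Feasibility.} After stage (ii), every kept pair has $c\in\mathcal{E}_t$, and each channel carries at most one user. Each user also appears at most once, because $\tilde{\mathbf{a}}$ is a function. So the partial map $\mathbf{a}'$ is injective into $\mathcal{E}_t$. Let $k=|\operatorname{dom}\mathbf{a}'|$. Then $|\mathcal{F}|=|\mathcal{E}_t|-k\ge|\mathcal{K}_t|-k=|\mathcal{U}_0|$, using the hypothesis $|\mathcal{E}_t|\ge|\mathcal{K}_t|$.

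I will show by induction over stage (iii) that a free channel always exists: when the $j$-th pooled user is processed, $|\mathcal{F}|-(j-1)\ge|\mathcal{U}_0|-(j-1)\ge 1$. Each step removes the chosen channel from $\mathcal{F}$, so injectivity is preserved. The final map therefore satisfies $\mathbf{a}_t(u)\in\mathcal{E}_t$ for all $u$ and $\mathbf{a}_t(u)\neq\mathbf{a}_t(u')$ for $u\neq u'$. These are exactly constraints \eqref{eq:const_user}--\eqref{eq:const_chan}, using $o_c(t)=0$ on $\mathcal{E}_t$.

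\textbf{Preservation.} The claim is that $\mathbf{a}_t(u)=\tilde{\mathbf{a}}(u)$ for every $u\in V$. This holds because stage (ii) never alters a singleton $S_c$, and stage (iii) only touches users in $\mathcal{U}_0$ and channels in $\mathcal{F}$. Both are disjoint from $V$ and its channels by construction.

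\textbf{Local optimality.} This is the main obstacle, because ``maximizes throughput within conflicting subsets'' must be given a precise meaning that is actually true. I will read it as optimality relative to the conflicting subset, with all other assignments held fixed. Fix a contested channel $c$ and the assignments outside $S_c$. Only one member of $S_c$ may occupy $c$. Since rates $R_{u,c}$ are additive and do not depend on other users under the linear objective \eqref{eq:objective}, the throughput contributed by channel $c$ is maximized by choosing $u^\star_c$. This is a one-line exchange argument: replacing $u^\star_c$ by any other $u\in S_c$ changes the channel's contribution by $R_{u,c}-R_{u^\star_c,c}\le 0$.

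I will state explicitly that this is optimality within each conflict set, and not global optimality of the whole map; the latter would require solving the full assignment problem, e.g.\ by Hungarian matching on $\mathcal{U}_0\times\mathcal{F}$. If a stronger claim is desired, I would replace greedy stage (iii) with a maximum-weight bipartite matching on $\mathcal{U}_0\times\mathcal{F}$. This yields optimality of the completion given the preserved assignments, at $O(n^3)$ cost. Either variant leaves the feasibility and preservation proofs unchanged.
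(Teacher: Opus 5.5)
Your proof is correct and follows the same three-step construction as the paper: filter invalid pairs, keep $u^* = \arg\max_{u\in\mathcal{U}_c} R_{u,c}(t)$ on each contended channel, then greedily fill the unused channels of $\mathcal{E}_t$. It also supplies what the paper's proof leaves implicit, namely the count $|\mathcal{F}| \ge |\mathcal{U}_0|$ guaranteeing the greedy completion never runs out of channels, the disjointness argument for preservation, and the precise per-channel sense in which ``local optimality'' holds (the paper's constant-factor remark is not needed for the statement). One small fix: define the pool as $\mathcal{K}_t \setminus \operatorname{dom}\mathbf{a}'$, so that hallucinated IDs outside $\mathcal{K}_t$ dropped in stage (i) do not enter it; your identity $|\mathcal{U}_0| = |\mathcal{K}_t| - k$ already presupposes this.
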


\begin{IEEEproof}
The repair operator consists of three steps. First, invalid outputs are filtered and valid pairs $(u,c)$ are extracted from $\mathbf{m}_t$. Second, for each conflicting channel $c$, where $\mathcal{U}_c \subseteq \mathcal{K}_t$, we apply:
\begin{equation}
u^* = \arg\max_{u \in \mathcal{U}_c} R_{u,c}(t).
\label{eq:winner_take_all}
\end{equation}
This yields a locally optimal assignment for each contention group.

Third, remaining users are greedily assigned to unused channels in $\mathcal{E}_t \setminus \{\mathbf{a}_t(u^*)\}$, ensuring a feasible complete mapping in $\mathcal{X}$. Under standard submodular utility assumptions, this greedy resolution achieves a constant-factor approximation to bipartite matching, bridging generative outputs and hard constraints.
\end{IEEEproof}

\textit{Remark 1 (Theoretical Guarantees and Action Mapping):}
The LLM output is constrained to a structured action vocabulary and parsed deterministically into a mapping dictionary. Theorem 1 guarantees that stochastic token sampling cannot violate spectral exclusivity by construction. The greedy conflict resolution admits a constant-factor approximation bound under submodular utilities.

The overall per-slot complexity of Algorithm 1 is: $\mathcal{O}(G \cdot L^2 d + K_t \log K_t)$, where $G$ is group size, $L$ is sequence length, $d$ is embedding dimension, and $K_t$ is active users. Due to hierarchical serialization, $L$ is bounded within 2048 tokens, scaling sublinearly with system size and improving over $\mathcal{O}(N^3)$ classical solvers.

For deployment efficiency, we adopt a static–dynamic KV-cache strategy: dynamic historical allocations are flushed at each slot $t$, while static constraint prompts are cached across slots to reduce redundant computation.

Finally, under imperfect CSI, estimation noise may affect the greedy selection in Eq.~\eqref{eq:winner_take_all}, potentially leading to suboptimal local assignments. However, the deterministic repair pipeline guarantees feasibility and ensures only bounded throughput degradation without violating collision-free constraints.

\begin{table}[!t]
\renewcommand{\arraystretch}{1.3}
\caption{System Simulation Parameters and Model Configuration}
\label{tab:config}
\centering
\begin{tabular}{ll}
\toprule
\textbf{Category} & \textbf{Parameter and Value} \\
\midrule
\multirow{3}{*}{Network Topology} & Cell Radius: $500$ m \\
& User Distribution: Spatial Poisson Process \\
& Noise Density: $-112$ dBm/Hz \\
\hline
\multirow{3}{*}{Link Budget} & Transmit Power $P_{\text{tx}}$: $23$ dBm \\
& Path Loss Exponent $\alpha$: $3.5$ \\
& Channel Bandwidth $B_c$: $5 \sim 20$ MHz \\
\hline
\multirow{3}{*}{Reward Formulation} & Structural Weight $\omega_1$: $1.0$ \\
& Performance Weight $\omega_2$: $1.0$ \\
& Depth Regularization $\omega_3$: $1.0$ \\
\hline
\multirow{4}{*}{Model Configuration} & Text Backbone: Qwen2.5-7B-Instruct \\
& Coder Backbone: Qwen2.5-Coder-7B-Instruct \\
& GRPO Group Size $G$: $8$ \\
& GRPO KL Coefficient $\beta$: $0.25$ \\
\bottomrule
\end{tabular}
\end{table}

\section{Experimental Setup}

\subsection{Network Scenario and Simulation Environment}
The simulation models a single-cell terrestrial network where nodes follow a homogeneous Poisson point process within a 500 m radius. Spectrum is divided into orthogonal channels with bandwidths uniformly sampled from $[5,20]$ MHz. We adopt a log-distance path loss model ($\alpha=3.5$) and AWGN noise at $-112$ dBm/Hz. Traffic is bursty and non-stationary, with active user sets $K_t \subset U$ varying per scheduling slot. Spectral exclusivity and primary-user protection are enforced as hard constraints during reward computation and post-inference verification.

To enable semi-physical deployment, the 7B backbone can run on edge devices such as NVIDIA Jetson modules. With INT4/INT8 quantization, structured pruning, and KV-cache offloading, memory usage is reduced below 6 GB, enabling low-power macro-scheduling at gateway scale.

\subsection{Baseline Methodologies and Compute-Equivalent Protocol}
We compare against four categories:
\begin{itemize}
\item \textbf{Combinatorial Optimization:} Exhaustive Enumeration provides small-scale upper bounds. The KM algorithm \cite{ref_km} is used as a standard baseline, while Grouped-KM partitions graphs to handle $O(N^3)$ complexity in large-scale settings.
\item \textbf{Meta-Heuristics:} Differential Evolution (DE) \cite{ref_de} uses population size 50 and crossover rate 0.9 as a classical iterative optimizer without learning priors.
\item \textbf{Reinforcement Learning:} DQN \cite{ref_dqn} and PPO \cite{ref_ppo} with 128-d hidden layers are used, with output heads dynamically resized to match action space cardinality, testing robustness under combinatorial explosion.
\item \textbf{Stochastic Baseline:} Uniform random allocation provides a lower-bound reference.
\end{itemize}

All learning-based methods are evaluated under a compute-equivalent protocol with matched GPU FLOPs and runtime budgets to isolate algorithmic efficiency. Results are averaged over Monte Carlo runs, with additional analysis of stability and convergence behavior.

\subsection{Evaluation Metrics}
We evaluate (1) \textit{Aggregate Throughput}, defined as mean Shannon capacity per slot over active users; (2) \textit{Valid Action Rate}, measuring constraint-satisfying allocations before repair; and (3) \textit{Inference Latency}, recorded as wall-clock decision time. Zero-shot generalization is tested on unseen topologies and densities without any parameter or prompt updates.

\section{Experimental Results and Analysis}

\subsection{Small-to-Medium Scale Benchmarking}

\begin{table*}[!t]
\renewcommand{\arraystretch}{1.2}
\caption{Aggregate Throughput Performance Across Varying Network Densities}
\setlength{\tabcolsep}{10pt}
\label{tab:performance}
\centering
\begin{tabular}{lcccccccc}
\toprule
\textbf{Algorithm} & \textbf{(5,5,1)} & \textbf{(7,10,2)} & \textbf{(10,15,3)} & \textbf{(15,20,4)} & \textbf{(5,5,2)} & \textbf{(8,10,4)} & \textbf{(12,15,6)} & \textbf{(15,20,8)} \\
\midrule
LLM (Coder) & 120.98 & 248.36 & 405.96 & \underline{532.81} & 239.23 & \underline{500.25} & \textbf{785.02} & \underline{1008.33} \\
Random Allocation & 110.63 & 221.66 & 355.97 & 468.10 & 224.93 & 464.85 & 724.92 & 940.30 \\
Exhaustive Enumeration & \underline{126.77} & \textbf{281.62} & \textbf{451.34} & \textbf{582.43} & \underline{244.15} & \textbf{538.26} & N/A$^*$ & N/A$^*$ \\
KM Algorithm & 118.42 & 249.75 & \underline{419.89} & 528.71 & 231.61 & 489.14 & \underline{746.99} & \textbf{1044.25} \\
Differential Evolution & 92.58 & 193.46 & 320.70 & 406.48 & 184.39 & 383.52 & 585.50 & 789.58 \\
DQN & \textbf{127.29} & \underline{250.26} & 402.83 & 499.56 & \textbf{249.93} & 481.40 & N/A$^*$ & N/A$^*$ \\
PPO & 108.87 & 239.58 & 380.04 & 479.58 & 234.12 & 467.53 & N/A$^*$ & N/A$^*$ \\
\bottomrule
\multicolumn{9}{l}{\small $^*$ N/A indicates memory exhaustion due to action space explosion.}
\end{tabular}
\vspace{0.2em}
\hbox{\small \textit{Note:} The network configuration tuple $(U, C, K)$ explicitly denotes (Total Users, Total Channels, Active Users) respectively.}
\end{table*}

\begin{table}[!t]
\renewcommand{\arraystretch}{1.2}
\caption{Inference Latency and Computational Scalability Profile (Unit: Seconds)}
\label{tab:runtime}
\centering
\begin{tabular}{lcccc}
\toprule
\textbf{Algorithm} & \textbf{(5,5,1)} & \textbf{(10,15,3)} & \textbf{(15,20,4)} & \textbf{(15,20,8)} \\
\midrule
LLM (Coder) & 7.8282 & 7.2637 & 7.5774 & 7.2067 \\
Random Allocation & 0.0000 & 0.0000 & 0.0001 & 0.0000 \\
Exhaustive Enumeration & 0.0001 & 0.0540 & 2.7850 & N/A$^*$ \\
KM Algorithm & 0.0002 & 0.0017 & 0.0047 & 0.0155 \\
DQN & 0.0012 & 0.0108 & 0.0180 & N/A$^*$ \\
PPO & 0.0007 & 0.0008 & 0.0015 & N/A$^*$ \\
\bottomrule
\multicolumn{5}{@{}p{\dimexpr\columnwidth-2\tabcolsep}@{}}{\small $^*$ N/A indicates memory/time exhaustion. LLM latency reflects second-level macro-scheduling cycles.}
\end{tabular}
\end{table}

\begin{table}[!t]
\renewcommand{\arraystretch}{1.2}
\caption{Ultra-Large-Scale Performance Scaling and Architectural Comparison}
\label{tab:large_scale}
\centering
\begin{tabular}{lccc}
\toprule
\textbf{Algorithm} & \textbf{Scale I} & \textbf{Scale II} & \textbf{Scale III} \\
\midrule
LLM (Text, SFT+GRPO) & 22893 & N/A$^\dagger$ & N/A$^\dagger$ \\
LLM (Coder, Pure GRPO) & 26545 & \textbf{41891} & \textbf{55099} \\
Grouped KM & \textbf{27703} & 38540 & 53096 \\
Random Allocation & 20713 & 32702 & 43159 \\
\bottomrule
\multicolumn{4}{@{}p{\dimexpr\columnwidth-2\tabcolsep}@{}}{\small $^\dagger$ N/A indicates completion length exceeding context window or training divergence.}
\end{tabular}
\end{table}

\begin{table}[!t]
\renewcommand{\arraystretch}{1.2}
\caption{Impact of Context Window Size on Throughput and Inference Overhead}
\label{tab:serialization_abl}
\centering
\begin{tabular}{lcc}
\toprule
\textbf{Context Constraint} & \textbf{Throughput} & \textbf{Latency (s)} \\
\midrule
1024 tokens & 36254 & \textbf{4.6313} \\
2048 tokens (Baseline) & \textbf{41891} & 7.4582 \\
4096 tokens & 37158 & 16.3234 \\
\bottomrule
\end{tabular}
\end{table}

\begin{table*}[!t]
\renewcommand{\arraystretch}{1.2}
\caption{Topology-Agnostic Adaptation Capability and Retraining Overhead Across Network Scales}
\label{tab:generalization}
\centering
\begin{tabular}{lccc}
\toprule
\textbf{Algorithm} & \begin{tabular}[c]{@{}c@{}}Parameter\\Retraining?\end{tabular} & \begin{tabular}[c]{@{}c@{}}Adaptation\\Mechanism\end{tabular} & \begin{tabular}[c]{@{}c@{}}Topology Shift\\Resilience\end{tabular} \\
\midrule
LLM (Coder/Text) & No & Prompt Context Injection (ICL) & Robust \\
DQN / PPO & Yes & Architecture Reshape \& Full Retraining & Fragile \\
KM / Grouped KM & No & Static Matrix Recomputation & Rigid (Partitioning) \\
Differential Evolution & No & Heuristic Parameter Retuning & Moderate \\
\bottomrule
\end{tabular}
\end{table*}

\begin{figure*}[!t]
\centering
\includegraphics[width=1.0\textwidth]{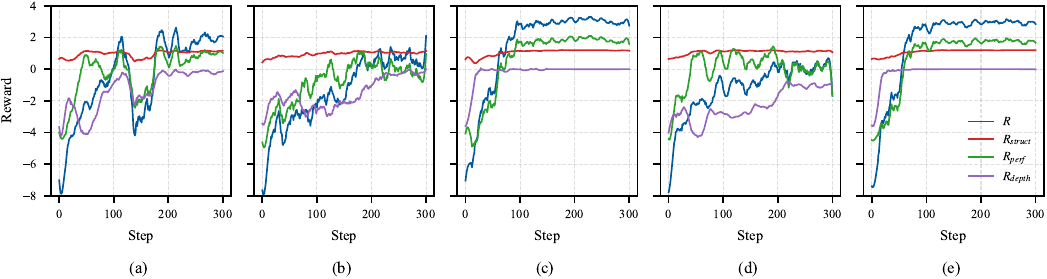}
\caption{GRPO training dynamics and reward component evolution under varying hyperparameters. Subfigures depict (a) Baseline ($G=8, \omega_1=1, \omega_2=1, \omega_3=1$), (b) Small group ($G=4, \omega_1=1, \omega_2=1, \omega_3=1$), (c) Structure-prioritized ($G=8, \omega_1=2, \omega_2=1, \omega_3=1$), (d) Performance-overweighted ($G=8, \omega_1=1, \omega_2=2, \omega_3=1$), and (e) Depth-regularized ($G=8, \omega_1=1, \omega_2=1, \omega_3=2$). Each subplot tracks total reward alongside $R_{\text{struct}}$, $R_{\text{perf}}$, and $R_{\text{depth}}$ components over 300 training steps.}
\label{fig:reward_trend_comparison}
\end{figure*}

\begin{table}[!t]
\renewcommand{\arraystretch}{1.2}
\caption{Training Paradigm Efficacy and Reward Component Necessity}
\label{tab:paradigm_abl}
\centering
\begin{tabular}{lcc}
\toprule
\textbf{Configuration} & \textbf{Throughput} & \textbf{Valid Rate$^\ddagger$} \\
\midrule
Coder + Pure GRPO & \textbf{26545} & \textbf{100\%} \\
Text + SFT + GRPO & 22893 & 78\% \\
Coder + Pure GRPO (w/o $\omega_1$) & N/A & 0\% \\
Coder + Pure GRPO (w/o $\omega_3$) & N/A & 0\% \\
\bottomrule
\multicolumn{3}{@{}p{\dimexpr\columnwidth-2\tabcolsep}@{}}{\small $^\ddagger$ Valid Rate reports the proportion of syntactically and constraint-compliant \textit{raw LLM outputs} prior to any repair. The repair pipeline is disabled during ablation to isolate the intrinsic policy alignment capability. N/A indicates training divergence.}
\end{tabular}
\end{table}

Table~\ref{tab:performance} summarizes aggregate throughput across eight combinatorial configurations with $(U,C) \in [5,15]$. In low-density settings, the Coder-based LSA achieves $95.5\%$ of the Exhaustive Enumeration upper bound at $(10,15,3)$, indicating near-optimal coordination. As scale increases to $(15,20,8)$, it maintains $1008.33$ throughput, outperforming Differential Evolution ($789.58$) and all DRL baselines.

DQN and PPO degrade sharply beyond medium scale due to action-space explosion and reward sparsity, often failing to produce valid mappings. The KM method remains competitive but requires full recomputation under objective changes, whereas LSA adapts via semantic prompt restructuring without retraining.

The performance gap is driven by representation differences: DRL compresses high-dimensional state information into fixed latent vectors, losing combinatorial structure as action space grows factorially. In contrast, LSA preserves dependencies via hierarchical serialization and multi-head attention, enabling dynamic allocation of computation to key constraint interactions and providing a stronger inductive bias than MLP-based agents.

Table~\ref{tab:runtime} further reports inference latency. While classical solvers operate at millisecond scale for MAC-layer scheduling, LSA requires $7.2$--$7.8$s per slot, positioning it as a quasi-static macro-scheduler for gateway orchestration, network slicing, or digital-twin planning. The deterministic repair pipeline guarantees $100\%$ feasibility, trading latency for structural adaptability and zero-shot generalization across dynamic topologies, which is difficult for rigid numerical solvers.

\subsection{Ultra-Large-Scale Scaling Laws and Paradigm Validation}

To evaluate asymptotic behavior, we define three ultra-dense topologies: \textbf{Scale I} $(U,C,K)=(1000,1000,200)$, \textbf{Scale II} $(1500,1500,300)$, and \textbf{Scale III} $(2000,2000,400)$. At Scale III, the combinatorial action space expands to $C^{K_t} = 2000^{400} \approx 2.58 \times 10^{1320}$ states, highlighting the extreme intractability of exact matrix-based and classical RL formulations and motivating our low-dimensional hierarchical semantic representation.

Throughput results in Table~\ref{tab:large_scale} reveal a clear scaling divergence. The Text-based SFT+GRPO model fails beyond $1000$ nodes due to context overflow, producing N/A results. In contrast, the Coder-based Pure GRPO model exhibits strong scaling behavior, reaching $41891$ at Scale II and $55099$ at Scale III.

All Grouped-KM baselines operate under the same inference time budget as the LLM pipeline ($\sim 7.5$s), within which they iteratively refine partitioning and matching. Despite this advantage, Coder consistently outperforms Grouped-KM by $8.7\%$ at Scale II and $3.8\%$ at Scale III. This under equal-time comparison demonstrates that execution-driven policy optimization captures global coordination more effectively than partitioned classical heuristics.

Overall, the hierarchical serialization enables single-pass reasoning over aggregated constraints, mitigating combinatorial explosion, while Grouped-KM remains limited by boundary artifacts and subgraph synchronization overhead even under extended optimization time. This crossover validates code-specialized generative architectures as a more efficient paradigm for large-scale spectral allocation.

\subsection{Context Window Optimization and Latency Trade-offs}

The hierarchical state serialization mechanism directly governs the token budget and reasoning fidelity. Table~\ref{tab:serialization_abl} quantifies the impact of context window constraints on \textbf{Scale II} performance. A $1024$-token limit forces aggressive state aggregation, discarding fine-grained channel correlations and resulting in a $13.5\%$ throughput drop. Conversely, expanding to $4096$ tokens increases latency by $118\%$ due to quadratic self-attention scaling, while throughput declines by $11.3\%$ as the model attends to redundant context noise and KV-cache overhead dominates the inference pipeline. The $2048$-token configuration establishes the Pareto optimum, balancing high-dimensional state representation with real-time macro-scheduling feasibility. This empirical validation aligns with the information bottleneck principle, demonstrating that optimal context compression maximizes the mutual information between serialized prompts and allocation decisions.

\subsection{Training Dynamics, Ablation, and Topology-Agnostic Adaptation}

The effectiveness of the tripartite reward design and GRPO training is evaluated via learning curves and ablation studies. Figure~\ref{fig:reward_trend_comparison} shows the reward evolution over 300 steps. Configurations (c) and (e), which emphasize structural compliance ($\omega_1=2$) and reasoning depth ($\omega_3=2$), converge within $\sim 100$ steps, indicating that explicit formatting and regularization stabilize the optimization process. The baseline configuration (a) shows a transient dip around step 115 due to exploration–exploitation transition, but converges by step 250, validating the fixed 300-step training budget as sufficient for stable convergence across all settings.

Configuration (b) ($G=4$) exhibits degraded performance due to insufficient group diversity, leading to slower convergence. Configuration (d) ($\omega_2=2$) converges to a bottleneck plateau ($17762$), suggesting that over-weighting throughput induces reward-hacking behavior.

Table~\ref{tab:paradigm_abl} (Scale I) further confirms the necessity of each reward component. Removing $R_{\text{struct}}$ or $R_{\text{depth}}$ leads to training collapse with $0\%$ valid actions, highlighting the importance of syntactic feasibility and output regularization. The Coder-based paradigm achieves a $100\%$ valid rate, compared to $78\%$ for the Text baseline, demonstrating the advantage of code-based reasoning for structured allocation.

Table~\ref{tab:generalization} evaluates topology-agnostic generalization. Unlike DRL methods that require retraining under distribution shifts, the proposed LSA framework leverages pre-trained code priors and hierarchical serialization to map unseen network states into a structured semantic space, enabling in-context adaptation without parameter updates. This replaces gradient-based adaptation with reasoning-driven generalization, improving robustness across varying network scales.

\textit{Boundary of Zero-Shot Generalization:} Under highly clustered spatial distributions such as a Mat{\'e}rn process, local interference density increases significantly, which may reduce the precision of initial allocations due to compressed global state representations. However, the deterministic repair pipeline (Theorem 1) guarantees feasibility by resolving constraint violations via greedy reassignment, ensuring bounded performance degradation without system failure.

\section{Conclusion}
This paper presents a generative coordination paradigm for DSA to address combinatorial intractability in large-scale 6G IoT networks. By reformulating spectrum allocation as a structured reasoning problem, the proposed LSA framework decouples decision efficiency from the exponential growth of the action space, overcoming limitations of value-based RL and classical bipartite matching methods. Through integration of GRPO and a deterministic repair pipeline, LSA achieves a balanced trade-off between combinatorial reasoning and physical feasibility. Extensive evaluations across varying network densities show that LSA maintains strong spectral efficiency and structural stability in ultra-dense settings up to 2000 nodes and channels. In contrast, traditional DRL approaches collapse due to action-space explosion (N/A at $\sim$20 nodes), while classical solvers suffer synchronization bottlenecks under comparable compute budgets. Overall, the results demonstrate that LLM-driven structured reasoning provides a scalable alternative to numerical optimization for wireless resource management, enabling interpretable and robust allocation in future hyper-connected networks. A key open direction is the development of cross-scale lightweight architectures to adapt large generative priors to extreme edge-compute environments.

\end{document}